\newtheorem{thm}{Theorem}
\newtheorem{defn}{Definition}
\newtheorem{lem}{Lemma}
\newtheorem{rem}{Remark}
\title{ Information and Power Transfer by Energy Harvesting Transmitters over a Fading Multiple Access Channel with Minimum Rate Constraints }
\author{Deekshith P K \IEEEauthorrefmark{1}, Trupthi Chougule\IEEEauthorrefmark{2}, Shreya Turmari\IEEEauthorrefmark{2}, Ramya Raju\IEEEauthorrefmark{2}, Rakshitha Ram\IEEEauthorrefmark{2} and Vinod Sharma\IEEEauthorrefmark{1} \\ {\IEEEauthorrefmark{1}ECE Dept., Indian Institute of Science, Bangalore, India. \IEEEauthorrefmark{2}PES Institute of Technology, Bangalore, India.}\\{\hspace{-.6cm} \IEEEauthorrefmark{1}Email: \{deeks, vinod\}@ece.iisc.ernet.in},\\ \IEEEauthorrefmark{2}Email:\{chouguletrupthi, shreya.turmari, ramyaraju93, rakshitharam.93\}@gmail.com.}
\begin{document}
\maketitle
\begin{abstract}
We consider the problem of Simultaneous Wireless Information and Power Transfer (SWIPT) over a fading multiple access channel with additive Gaussian noise. The transmitters as well as the receiver harvest energy from ambient sources. We assume that the transmitters have two classes of data to send, viz. delay sensitive and delay tolerant data. Each transmitter sends the delay sensitive data at a certain minimum rate irrespective of the channel conditions (fading states). In addition, if the channel conditions are good, the delay tolerant data is sent.  {Along with data, the transmitters also transfer power to aid the receiver in meeting its energy requirements.} In this setting, we characterize the \textit{minimum-rate capacity region} which provides the fundamental limit of transferring information and power simultaneously with minimum rate guarantees. Owing to the limitations of current technology, these limits might not be achievable in practice. Among the practical receiver structures proposed for SWIPT in literature, two popular architectures are the \textit{time switching} and \textit{power splitting} receivers. For each of these architectures, we derive the  minimum-rate capacity regions. We show that power splitting receivers although more complex, provide a larger capacity region.
\end{abstract}
\noindent
\section{Introduction}

The abstract idea of transmitting power over a noisy communication link using an information carrying symbol  \cite{varshney2008transporting} finds   appealing applications in the context of energy deprived low power sensor networks,  Internet of Things (IoT) and potentially in many next generation communication systems. Of particular interest, is the idea of Simultaneous Wireless Information and Power Transfer (SWIPT). The topic has received considerable research attention in recent years (\cite{lu2015wireless}).
\par The fundamental trade-off in jointly transferring information and energy over a discrete memoryless channel was first studied in \cite{varshney2008transporting}. This work characterized the inherent trade-off in terms of a capacity-energy function. The idea was further pursued in \cite{grover2010shannon} and the authors obtained an optimal trade-off  between the achievable rate versus  transferred power over a point-to-point frequency selective channel. One important concern in designing simultaneous information and power harnessing systems is the receiver architecture. Two practical receiver models for SWIPT are the \textit{time-switching receiver} and the \textit{power-splitting receiver} (\cite{zhang2013mimo}). A dynamic power splitting approach is pursued in \cite{zhou2013wireless}. 
 \par In the case of multi-user channels, \cite{fouladgar2012transfer} provides the capacity-energy region of a discrete memoryless multiple access channel (MAC) and the capacity-energy function of a discrete memoryless multi-hop channel. Recent work \cite{amor2015feedback} considers a  Gaussian MAC (GMAC) with and without feedback, derives the capacity-energy region and proves that feedback can strictly increase the capacity-energy region of the channel. In their model in \cite{amor2015feedback}, the authors allow cooperation among transmitters. In \cite{hadzi2014multiple}, a model wherein a base station powers mobile devices in the downlink and the devices transfer data in the uplink is considered. Achievable rate regions are obtained for this case.
\par In this work, we address the problem of characterizing fundamental trade-off in jointly transferring information and power over a fading GMAC, simultaneously ensuring a certain quality of service (QoS) requirement. The QoS parameter we are interested in, is one of minimum instantaneous rate. To the best our knowledge, this is the first work characterizing the fundamental limits of communication of an SWIPT GMAC system with minimum rate guarantees.  Also in this work, unlike \cite{fouladgar2012transfer}, \cite{amor2015feedback} transmitters harvest energy from their ambient sources and the receiver also has an ambient energy harvesting source (e.g. sun or wind). We model energy consumption at the receiver explicitly by a discrete time stochastic process taking non-negative values. The realizations of this process are assumed to be \emph{not} known to the transmitters. All the fundamental rate regions that we derive in this work are achievable under this assumption. 

Owing to the limitations in technology, capacity-energy function \cite{varshney2008transporting} provides upper bound for rates achievable in current SWIPT systems. An \textit{ideal receiver} is one which harvests energy from a data stream without \textit{distorting} its information content and hence can achieve this upper bound. First, we derive the minimum-rate capacity region of a fading GMAC with energy harvesting transmitters and an ideal receiver. Next, we characterize the minimum-rate capacity region when the receivers are  time-switching and power-splitting, respectively. 

Following is the organization of the paper. In Section \ref{S_Prel}, we present the system model and introduce notation. In Section \ref{S_ID}, we derive the minimum-rate capacity region for the model with the  ideal receiver. In Section \ref{S_TS}, we obtain the minimum-rate erasure capacity region for the time-switching receiver model.  In Section \ref{S_PS}, we obtain the minimum-rate capacity region for the power-splitting receiver model. In Section \ref{S_NR}, we provide numerical results. We conclude the paper in Section \ref{S_Conc}.

\section{System Model}

\label{S_Prel}
\begin{figure}[h]
\begin{center}
\includegraphics[scale=0.45]{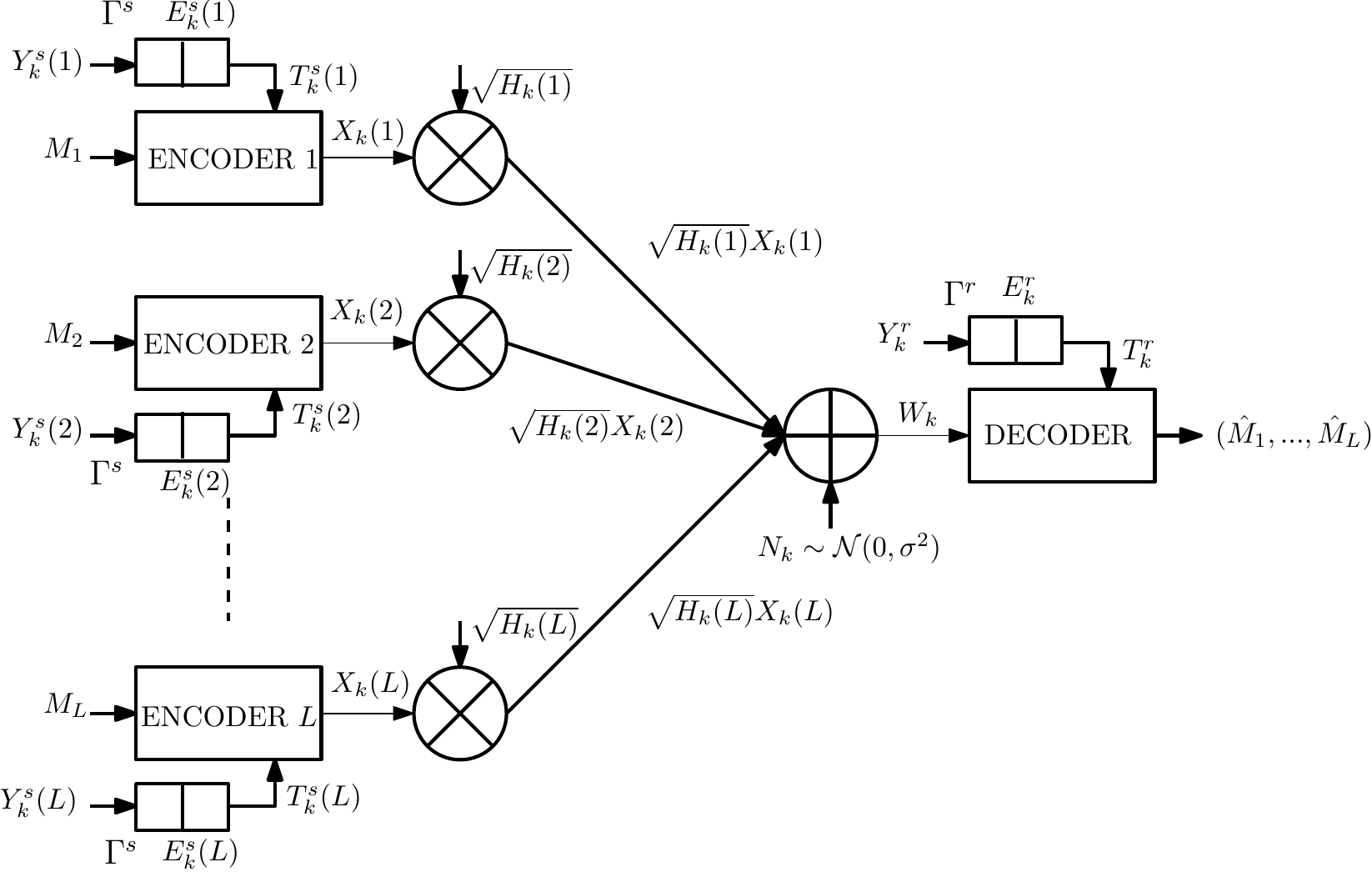}
\caption{GMAC system model} \label{fig1}
\end{center}
\end{figure}

Our system model is motivated by a multihop sensor network where all the sensor nodes are harvesting energy from the environment. Our model of a MAC is a basic building block in such a network. The receiver node is also a sensor node which collects information from the uplink nodes to transmit further to a downlink fusion node or another sensor node for further transmission. Since the receiver node may have to transmit its own data as well as of the uplink nodes, it needs more energy and hence may receive some energy from the uplink nodes along with data. 

\subsection{Transmitter and Channel Model}
 We consider $L$  energy harvesting transmitters transferring information  and power simultaneously to a spatially separated receiver (Figure \ref{fig1}). The transmitters and the receiver are equipped with a battery (energy buffer or simply, buffer) of  infinite storage capacity. We consider a slotted system. In slot $k$, transmitter $i$ harvests $Y_k^s(i)\in \mathcal{Y}^s \subseteq \mathbb{R}^+$ amount of energy wherein $s$ stands for sender and $\mathbb{R}^+$ denotes the positive real axis. For each $i \in \{1,2,\hdots L\}\equiv [1:L]$, $\{Y_k^s(i),~ k \geq 1\}$ is a stationary, ergodic process independent of each other. At transmitter $i$, $\mathbb{E}[Y^s(i)]$  denotes the mean energy harvested per slot. 

Transmitter $i$ selects a message $M_i$, uniformly randomly from its message set $\mathcal{M}_i$ with cardinality $|\mathcal{M}_i|$. This message is communicated to the receiver using an $n$ length codeword, denoted as $X'^n(i)$. Let  $X_k(i) \in \mathcal{X}$ denote the channel input symbol (possibly different from $X_k'(i)$ due to energy harvesting constraints) of transmitter $i$ in slot $k$. Accordingly, $\{X_k(M_i),~ k\geq 1\}$ denotes the channel input process corresponding to the message $M_i$. 

The messages are communicated across a memoryless Gaussian Multiple Access Channel (GMAC) subjected to a flat fading process. Let $H_k(i)$ denote the channel gain in the $k^{\text{th}}$ slot, of the channel between the transmitter $i$ and the receiver. For each $i$, let $\{H_k(i),~k\geq 1 \}$ be a stationary, ergodic process with { joint stationary distribution } $F_\mathbf{H}(.)$. We assume that the instantaneous realizations of $H_k(i)\in \mathcal{H} \subseteq \mathbb{R}^+$, referred as Channel State Information (CSI), are known at the transmitters and the receiver. The fading process is assumed to be independent across the users. In slot $k$, the channel output is  $W_k=\sum\limits_{i=1}^{L} \sqrt{H_k(i)}X_k(i)+N_k,$ where $\{N_k,~ k \geq 1\}$ is an i.i.d Gaussian process with mean $0$, variance $\sigma^2$ and the probability density function of $N_k$ is denoted by  $\mathcal{N}(0,\sigma^2)$.

 In slot $k$, transmitter $i$ adopts an energy management policy $T_k^s(i):\mathcal{M}_i \times \mathcal{Y}_s^k \times \mathcal{H}^L  \rightarrow \mathcal{T}^s(i) \subseteq \mathbb{R}^+$ which is a stochastic mapping where the Cartesian products $\mathcal{Y}_s^k \triangleq \mathcal{Y}^s \times \hdots \times \mathcal{Y}^s $ ($k$ times), $\mathcal{H}^L \triangleq \mathcal{H} \times \hdots \times \mathcal{H} $ ($L$ times). Policy $T_k^s(i)$ determines the amount of energy transmitter $i$ spends in slot $k$. Let $E_k^s(i)\in \mathcal{A}^s \subseteq \mathbb{R}^+$ denote the available energy in the $i^{\text{th}}$ transmitter's  buffer at the beginning of slot $k$. We assume that the harvested  energy is available for transmission in the same slot. Thus, total energy available for transmission in the  $k^\text{th}$ slot at transmitter $i$ is $E_k^s(i)+Y_k^s(i) \equiv \hat{E}_k^s(i) \in \hat{\mathcal{A}}^s \subseteq \mathbb{R}^+$, i.e., $T_k^s(i) \leq \hat{E}_k^s(i)$. The evolution of the process $\{E_k^s(i),~k \geq 1\}$ is as follows: $E_{k+1}^s(i)= \hat{E}_k^s(i)-T_k^s(i) .$

\subsection{Receiver Model}

For an SWIPT system, the receiver module typically consists of a rectenna unit and a communication module \cite{krikidis2014simultaneous}. We model the total energy consumption at the receiver by a stationary, ergodic process $\{T_k^r,~ k \geq 1\}$ (here $r$ denotes receiver). As noted in \cite{mahdavi2013energy}, dominant energy consumption processes at the front end of a communication receiver are sampling, demodulation, filtering and quantization. This energy requirement is modelled as the $\{T_k^r\}$ process. Here we are not including transmission energy by the receiver for further transmission of data. 

 The average energy requirement per slot is $\mathbb{E}[T^r]$.  The receiver harvests $Y_k^r \in \mathcal{Y}^r\subset \mathbb{R}^+$  amount of energy from an ambient energy source at the beginning of  slot $k$. If need be, the entire harvested energy can be used in the same slot. We assume $\{Y_k^r,~ k \geq 1\}$ to be a stationary, ergodic process. The average energy harvested  from the ambient source per slot is $\mathbb{E}[Y^r]$. Since, the receiver can potentially harvest energy from the incoming data stream, the total harvested energy per slot can be greater than $Y_k^r$ and we denote it by $D_k$. In slot $k$, $E_k^r \in  \mathcal{A}^r \subseteq \mathbb{R}^+$ denotes the energy available in the receiver's energy buffer. Let $\hat{E}_k^r \triangleq E_k^r+Y_k^r \in \hat{\mathcal{A}}^r \subseteq \mathbb{R}^+$. Average energy deficit at the receiver $\Delta \triangleq (\mathbb{E}[T^r]-\mathbb{E}[Y^r])^+$ where $(x)^+= \max\{0,x\}$. Thus, the receiver should harness an average of $\Delta$ units of energy per time slot from the incoming RF data stream to meet its operational requirements. Next, we discuss the various receiver architectures that we consider in this work.

\subsubsection{Ideal Receiver}

An ideal receiver can simultaneously harvest energy and \textit{read} the noise corrupted data symbol $W_k$. In every slot $k$, total harvested energy at the receiver is $D_k=\xi_k+Y_k^r$, where $\xi_k=\eta \big(\sum\limits_{k=1}^{L}\sqrt{H_k(i)}X_k(i)\big)^2$ and $\eta$ denotes the energy harvesting efficiency factor (\cite{lu2015wireless}).

\subsubsection{Time-Switching Receiver}
\label{TSR}
A time-switching receiver can either harvest energy or receive the information bearing noise corrupted data symbol, but not both at the same time.  In slot $k$, if $\hat{E}_k^r \geq T_k^r$, then the communication module successfully receives $W_k$. Energy harvested in the slot is $D_k=Y_k^r$. Instead, if $\hat{E}_k^r < T_k^r$, the receiver  merely harvests $\xi_k$ amount of energy from the data stream and stores it in the buffer.  The total energy harvested in the slot is $D_k=Y_k^r+\xi_k$ and the data symbol received is recorded as an erasure. Let $\mathcal{E}_k(t) \triangleq 1\{\hat{E}_k^r < T_k^r\}$, where $1\{.\}$ is the  indicator function. Let ${\mathcal{E}}_k^c(t)=1-\mathcal{E}_k(t)$. We can write the evolution equation for the energy in time switching receiver's energy buffer as $E_{k+1}^r=\hat{E}_k^r+\mathcal{E}_k(t)\xi_k-\mathcal{E}^c_k(t)T_k^r.$ 
  
 \subsubsection{Constant Fraction Power-Splitting Receiver} 
 \label{PSR}
 \par  { We consider power-splitting receivers  which \textit{split} a \textit{fixed fraction} of the received power between the communication receiver and the rectenna in each slot. Power splitter will split a fraction $\pi_{\mathcal{E}}$ of the received power $\xi_k$ such that, under stationarity,  $\pi_{\mathcal{E}}\mathbb{E}[\xi_k]=\Delta$. Then, from user $i$, $\pi_{\mathcal{E}}\mathbb{E}_{\mathbf{H}}[T_i^s(\mathbf{H})]$ is taken by the receiver for energy, and the rest for communication. Equivalently, we can think that transmitter $i$ is communicating through a fading channel with a fading coefficient $\sqrt{\pi_{\mathcal{E}}^c H_k(i)}$ in slot $k$, where $\pi^c_{\mathcal{E}} \triangleq 1-\pi_{\mathcal{E}}$. Making $\pi_{\mathcal{E}}$ adaptively changing with $\mathbf{H}_k$ and $\hat{E}_k^r$ and using Markov Decision Theory can improve the performance of time-switching and power-splitting models. If $\hat{E}_k^r+\pi_{\mathcal{E}}\xi_k < T_k^r$, then the noise corrupted data symbol is registered as an erasure.  Let $\mathcal{E}_k(p)\triangleq 1\{\hat{E}_k^r+\pi_{\mathcal{E}}\xi_k < T_k^r\} $ and $\mathcal{E}_k^c(p)=1-\mathcal{E}_k(p)$. If $\hat{E}_k^r+\pi_{\mathcal{E}}\xi_k \geq  T_k^r$, communication module receives the noise corrupted data symbol $\tilde{W}_k=\sum\limits_{i=1}^{L}\sqrt{\pi^c_{\mathcal{E}}H_k(i)}X_k(i)+N_k$. Total energy harvested in any slot is $D_k=Y_k^r+\pi_{\mathcal{E}}\xi_k$.  We can write the following evolution equation: $E_{k+1}^r=\hat{E}_k^r+\mathcal{E}_k(p)\xi_k-\mathcal{E}^c_k(p)T_k^r.$ }
  
 The energy buffers at the transmitters and the receiver start with some initial energy at time $k=0$. Thus, the process $\{E_k^s(i)\}$ is not necessarily stationary and hence, the process $T^s(i) \equiv \{T_k^s(i)\}$ need not be stationary. We consider processes $T^s(i)$ that belong to the class of Asymptotic Mean Stationary (AMS), ergodic processes    \cite{gray2011entropy}. For a  treatment on the Shannon capacity of a point to point AWGN channel with energy harvesting transmitters adopting AMS, ergodic policies refer \cite{rajesh2014capacity}. Limiting to such a class will not reduce our capacity region.  Since $T^s(i)$ for each $i$ is AMS, ergodic and independent across $i$, $ \{\mathbf{X} _k,~ k \geq 1\}$ is AMS, ergodic where $\mathbf{X}_k \triangleq  (X_k(1), \hdots, X_k(L) )$. In addition, since the channel is memoryless, $\{(\mathbf{X}_k, W_k,T_k^r,E_k^r,Y_k^r ),~ k \geq 1\}$ is jointly AMS, ergodic. 
\section{Fading GMAC with Ideal Receiver} 
\label{S_ID}
{In this section, we characterize the minimum-rate capacity region of a fading GMAC SWIPT system with an ideal receiver}. We consider a block fading model in which the channel remains constant for the duration of a coherence time. The total duration of communication to transmit the codeword spans multiple coherence times. Let $R(i) =\frac{\log(|\mathcal{M}_i|)}{n}$ denote the rate of communication of the $i^{\text{th}}$ transmitter. In addition, let $\rho(i)$ denote the minimum rate at which transmitter $i$ communicates. Note that $\rho(i)$ is a model parameter. The delay sensitive data of transmitter $i$ is to be transmitted at a minimum rate $\rho(i)$. A certain amount of power is required for this. Additional power, if any, is utilized to transmit the delay tolerant data. Note that, if the fading is such that the fading states can take arbitrarily low values, average power required to maintain the minimum rate will be infinity. This is prohibitive. In such scenarios, it make sense to talk of the minimum-rate \textit{outage} capacity region. If the fading process $\{H_k(i),~ k \geq 1\}$ for user $i$ is such that $\mathbb{E}[\frac{1}{H_k(i)}]<\infty$, it is possible to achieve minimum rates in all fading states with a finite average power constraint. In the following, we assume the same. 

We consider the problem of maximizing the rate of transmission of delay tolerant data subject to, maintaining a minimum rate of transmission in each block and meeting an average  power constraint at the transmitters and the receiver. Thus, the problem is one of maximizing certain ergodic rates (\cite{tse1998multiaccess}) subject to maintaining particular delay limited rates (\cite{hanly1998multiaccess}).

In the previous section, we defined energy management policies $T^s(i) $ at each transmitter. The actions corresponding to those policies (energy to be used for transmission), in any slot,  depend on the entire history of energy harvesting process i.e.  $(Y_1^s(i),\hdots Y_k^s(i))$ in addition to the message $M_i$ and joint fading states $\mathbf{H}_k \triangleq \big(H_k(1),\hdots ,H_k(L)\big )$. We now define a sub class of those policies viz. the class of Markov policies. We have the following definition:
\begin{defn}{\emph{(Markov Energy Management Policy):}}
An energy management policy $T_{m,k}^s(i)$ at transmitter $i$, is called a \emph{Markov policy} if $T_{m,k}^s(i): \mathcal{M}_i \times \hat{\mathcal{A}}^s \times \mathcal{H}^L \rightarrow \mathcal{X}$.
\end{defn}
 
In case of Markov policies, at any time, the dependence of action is restricted to the energy in the buffer and energy harvested by the transmitter in that slot, rather than explicitly on  the entire history of energy harvesting process. Note that, because we consider the harvesting processes $\{Y_k^s(i)\}$ to be stationary, ergodic, the policy $ T_{m,k}^s(i)$ may not be Markov (i.e. the processes involved may not be a Markov process) policy in true sense. Nevertheless, we stick on to this terminology to emphasise that the policy is different from the history dependent policies. From now on, we shall denote a Markov policy at transmitter $i$ and slot $k$ as $T^s_k(i)$,  unless explicitly stated otherwise.

{We also make a note of the following standard definitions. If $\mathbf{M}=(M_1,M_2\hdots M_L)$ is the transmitted message vector and $\widehat{\mathbf{M}}$ denotes the decoder's estimate of the same, the average probability of message decoding error (averaged over all random codebooks), denoted as $P_{e}^{(n)}$, is defined to be $Pr\big\{\widehat{ \mathbf{M}} \neq \mathbf{M}  \big\}$. Note that, above definition is made incorporating the probability of the error event that the receiver is not able to receive the entire codeword (corrupted by noise) due to energy deficiency. A rate vector  $\big(R(1),\hdots R(L)  \big)$ is said to be achievable if, for every joint fade vector $\mathbf{h}=\big(h(1),\hdots h(L)\big)$ there exist a sequence of  $\big((2^{nR_1(\mathbf{h})},2^{nR_2(\mathbf{h})},\hdots 2^{nR_L(\mathbf{h})} ),n\big)$ codes, $L$ encoders, a message decoder and an energy receiver such that the probability of message decoding error  $P_{e}^{(n)} \rightarrow 0$ as $n \rightarrow \infty$, the instantaneous rate vector $\mathbf{R}(\mathbf{h}) \triangleq \big(R_1(\mathbf{h}),\hdots R_L(\mathbf{h})\big)$ is such that $R_i(\mathbf{h})\geq \rho(i)$ for all $\mathbf{h}\in \mathcal{H}^L$ and $\mathbb{E}_{\mathbf{H}}[R_i(\mathbf{h})]\geq R(i)$. Minimum-rate capacity region is the closure of the set of all achievable rate vectors.}

\subsection{Minimum-Rate Capacity Region for Ideal Receiver}
We consider energy harvesting transmitters with arbitrarily initialized energy buffer. Since the energy availability at the transmitters is solely dependent on the harvesting process, there can be instances wherein a transmitter is devoid of the required energy to transmit a codeword symbol. Thus the energy harvesting constraints further  restrict the minimum rates that are admissible. In particular, note that, without the energy harvesting constraint, the delay sensitive data can be sent at any rate within the delay limited capacity of the corresponding channel with average power constraints. The energy harvesting process puts an instantaneous, time varying peak power constraint on the transmitted symbols. Hence, the minimum rate $\rho(i)$ for each transmitter $i$ has to be chosen such that the rate vector $\bm{\rho}~\triangleq \big(\rho(1),\hdots,\rho(L)\big)$ is within the corresponding delay limited capacity region of all peak power constrained channels induced by the energy harvesting process. We denote,
\begin{flalign*}
\hspace{-20pt}
\mathcal{C}(\mathbf{T}^s)  \triangleq  \Big\{\mathbf{R}:~ & \bm{\rho}(A) \leq \mathbf{R}(A) \leq  \mathbb{E}_\mathbf{H}\Big[\mathbf{C}_{A}(\mathbf{H})\Big],~ \forall A \Big\},
\end{flalign*}
where $\mathbf{C}_{A}(\mathbf{H}) \triangleq \frac{1}{2}\log\Big(1+\frac{1}{\sigma^2}\sum\limits_{i \in A}H(i)T_i^s(\mathbf{H})\Big)$,~ $\bm{\rho}(A) \triangleq \sum\limits_{i \in A}\rho(i)$, $\mathbf{R}(A) \triangleq \sum\limits_{i \in A}R_i$ and $T_i^s(\mathbf{H})$ (or simply $T_i^s$) denotes a Markov energy management policy adopted at transmitter $i$ when the joint fading vector is $\mathbf{H}$ and $ A \subset [1:L]$. Also, let $$\mathcal{T}^s_m(\Delta)  \triangleq  \Big\{\mathbf{T}^s: \mathbb{E}_\mathbf{H}\big[T_i^s(\mathbf{H})\big]\leq \mathbb{E}[Y^s(i)],~R_i(\mathbf{H}) \geq \rho(i),$$ $$ \sum\limits_{i=1}^{L}\mathbb{E}_\mathbf{H}\big[H(i)T_i^s(\mathbf{H})\big] \geq \frac{\Delta}{\eta},~ i \in [1:L]\Big\},$$ where $\mathbf{T}^s=(T_1^s,\hdots T_L^s)$ ($m$ stands for Markov). 

Now, we provide a characterization of the minimum rate capacity region.

\begin{thm}
\label{cap_region}
\emph{(Minimum-Rate Capacity Region for Ideal Receiver)}: The minimum rate capacity-energy region is 
\begin{flalign*}
\hspace{35pt}\mathcal{C}(\Delta) = \overline{\text{Conv}}\Bigg(\bigcup_{\mathbf{T}^s \in \mathcal{T}^s_m(\Delta)} \mathcal{C}(\mathbf{T}^s)\Bigg),
\end{flalign*}
where $\overline{\text{Conv}}(A)$ denotes the closure of convex hull of $A$.
\end{thm}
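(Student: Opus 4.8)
The plan is to establish $\mathcal{C}(\Delta)$ by the usual two-part program for fading MAC regions, achievability and converse, while threading the energy-harvesting and power-transfer constraints through both halves. A preliminary reduction I would make is to argue that it suffices to work with Markov policies: the achievable rate region depends on a policy only through the per-state transmit powers $T_i^s(\mathbf{H})$ (the concave rate maps $\mathbf{C}_A(\mathbf{H})$ and the averages $\mathbb{E}_\mathbf{H}[T_i^s(\mathbf{H})]$, $\mathbb{E}_\mathbf{H}[H(i)T_i^s(\mathbf{H})]$), and any feasible per-state power profile attainable by a general history-dependent policy is also attainable by a policy depending only on the current buffer state $\hat{E}_k^s(i)$ and current fading $\mathbf{H}_k$. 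This is what lets me restrict the union to $\mathbf{T}^s\in\mathcal{T}^s_m(\Delta)$ without shrinking the region.

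For \emph{achievability}, fix $\mathbf{T}^s\in\mathcal{T}^s_m(\Delta)$ and $\mathbf{R}\in\mathcal{C}(\mathbf{T}^s)$. The first and most delicate step is to replace the causal, time-varying peak constraint $T_k^s(i)\le \hat{E}_k^s(i)$ by the average constraint $\mathbb{E}_\mathbf{H}[T_i^s(\mathbf{H})]\le\mathbb{E}[Y^s(i)]$ at negligible asymptotic cost. Since the harvesting processes are stationary ergodic and the buffers have infinite capacity, under the average constraint the battery at each transmitter is stable and the fraction of slots in which the target energy exceeds the available energy vanishes; I would invoke the point-to-point AMS-ergodic result of \cite{rajesh2014capacity} user-by-user to justify this. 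Conditioned on the known CSI, in each fading block I then use Gaussian codebooks of powers $T_i^s(\mathbf{h})$, achieving the instantaneous polymatroid $\{\mathbf{R}(\mathbf{h}):\mathbf{R}(A)(\mathbf{h})\le\mathbf{C}_A(\mathbf{h})\}$, and I operate at a point with $R_i(\mathbf{h})\ge\rho(i)$ in every block (a fixed-rate delay-limited layer for $\bm\rho$, an opportunistic layer for the surplus). Ergodic averaging of the per-block rates produces the upper faces $\mathbb{E}_\mathbf{H}[\mathbf{C}_A(\mathbf{H})]$. The receiver-sustainability point is model-specific: for the ideal receiver $D_k=\xi_k+Y_k^r$ with $\xi_k=\eta\big(\sum_i\sqrt{H_k(i)}X_k(i)\big)^2$, and using independence and zero-mean of the codeword symbols across users the cross terms drop, giving $\mathbb{E}[\xi]=\eta\sum_i\mathbb{E}_\mathbf{H}[H(i)T_i^s(\mathbf{H})]$; the constraint $\sum_i\mathbb{E}_\mathbf{H}[H(i)T_i^s(\mathbf{H})]\ge\Delta/\eta$ then yields $\mathbb{E}[D]\ge\mathbb{E}[T^r]$, so the receiver buffer is stable and the energy-deficiency error events have vanishing frequency. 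Time-sharing codes built for different $\mathbf{T}^s$ realizes the convex hull, and the closure supplies the remaining boundary points.

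For the \emph{converse}, I would start from Fano's inequality, so that $P_e^{(n)}\to 0$ forces $H(\mathbf{M}\mid\widehat{\mathbf{M}})=n\epsilon_n$ with $\epsilon_n\to 0$. Single-letterizing $\mathbf{R}(A)$ for each subset $A$ conditioned on the CSI, together with the Gaussian maximum-entropy bound, gives $\mathbf{R}(A)\le\mathbb{E}_\mathbf{H}\big[\tfrac12\log(1+\tfrac{1}{\sigma^2}\sum_{i\in A}H(i)P_i(\mathbf{H}))\big]$ for the empirical per-state average powers $P_i(\mathbf{H})$. Energy conservation at each transmitter (the time-averaged consumed energy cannot exceed the time-averaged harvested energy, by the buffer recursion $E_{k+1}^s(i)=\hat{E}_k^s(i)-T_k^s(i)$ and ergodicity) forces $\mathbb{E}_\mathbf{H}[P_i(\mathbf{H})]\le\mathbb{E}[Y^s(i)]$; the per-block requirement supplies $\bm\rho(A)\le\mathbf{R}(A)$; and stability of the receiver buffer (needed so that energy-deficiency errors vanish) forces $\sum_i\mathbb{E}_\mathbf{H}[H(i)P_i(\mathbf{H})]\ge\Delta/\eta$. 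These are exactly the memberships $\{P_i\}\in\mathcal{T}^s_m(\Delta)$ and $\mathbf{R}\in\mathcal{C}(\{P_i\})$, whence $\mathbf{R}\in\mathcal{C}(\Delta)$.

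I expect the \textbf{main obstacle} to be the achievability reduction from the causal, random peak-energy constraint to the average power constraint: controlling the rate loss incurred in slots where the battery cannot supply the target power and proving this loss is asymptotically null under AMS-ergodicity with infinite buffers, while simultaneously honoring the per-block minimum rates. A secondary delicate point is making the receiver-buffer argument fully rigorous on both sides, namely relating the harvested $\xi_k$ to the transmit powers and showing the energy-deficiency erasures are asymptotically negligible, and carefully justifying that Markov policies incur no loss of optimality.
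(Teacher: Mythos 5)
Your proposal follows essentially the same route as the paper's own proof: achievability via Gaussian random coding with a truncated Markov power allocation, using the fact that the transmitter buffers can be made to grow so that the causal peak-energy constraint is asymptotically inactive (the paper handles this with the AMS-ergodic/AEP machinery of exactly the reference \cite{rajesh2014capacity} you invoke), together with a Fano/energy-conservation converse. The one refinement you should make explicit is the paper's device of backing off the codebook power to $\mathbb{E}[T_i'(\mathbf{H})]=\mathbb{E}[Y^s(i)]-\epsilon$, which gives the buffer a strictly positive drift so that $E_k^s(i)\to\infty$ a.s.\ and the truncation vanishes---this is stronger and cleaner than the battery ``stability'' you assert (which can fail at exact equality of consumption and harvest), and the $\epsilon$ loss is absorbed by the closure in the statement of the region.
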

\begin{proof}
See Appendix A.
\end{proof}
\begin{rem}
\label{rem1}
 For the special case of GMAC with constant channel gain, say $(g_1,\hdots,g_L)$, the constraint $\sum\limits_{i=1}^{L} \mathbb{E}_\mathbf{H}\big[H(i)T_i^s(\mathbf{H})\big] \geq \Delta / \eta$ is equivalent to $\Delta/ \eta \leq  \sum\limits_{i=1}^{L}g_i\mathbb{E}[Y^s(i)]$. If the model parameters are such that this condition is met, the constraint is innocuous. If $\Delta/ \eta >  \sum\limits_{i=1}^{L}g_i\mathbb{E}[Y^s(i)]$, energy deficit at the receiver can be mitigated (at the expense of reduction in data rate) by additional transmitter cooperation \cite{fouladgar2012transfer}. 
\end{rem}

\begin{rem}
Note that the motivation to consider minimum rate constrained communication is to ensure the transmission of a delay sensitive data at a certain minimum rate, irrespective of the fading states. One way to implement the simultaneous transmission of delay sensitive and delay tolerant data, is to employ two separate codebooks-one for the delay sensitive data and the other for the delay tolerant data. The decoder at the end of every block, decodes the delay sensitive data by, treating the codeword corresponding to delay tolerant data and the codeword corresponding to the delay sensitive data of the other user, as interference. The minimum rates permissible might be low due to interference.
\end{rem}
\subsection{Explicit Characterization of Minimum-Rate Capacity Region}
Next, we turn our attention to the problem of explicitly characterizing the boundary points of the minimum-rate capacity-energy region of the channel model under consideration. Such a characterization was first obtained for a fading GMAC in \cite{tse1998multiaccess} (\textit{Lemma} 3.10) (without energy harvesting). A similar revenue optimization problem with minimum rate guarantees was pursued in \cite{goyal2002optimal}. In this section, we derive the corresponding results  for the minimum rate capacity region of a fading GMAC with energy harvesting transmitters and receiver.  

 We define the vector $\bm{\mu}\triangleq \big(\mu(1),\hdots\mu(L)\big)$ such that $\mu(i)  \in \mathbb{R}^+,~ \forall i$. As in \cite{tse1998multiaccess}, we prioritize the users based on the vector $\bm{\mu}$ (also called the rate reward vector). The characterization of the  minimum rate capacity-energy region is in terms of this parameter $\bm{\mu}$ (i.e., different points of the boundary can be obtained by choosing the $\bm{\mu}$ vector appropriately and solving the following optimization problem).  We refer $\mathbf{R}(\mathbf{H}) \triangleq \big(R_1(\mathbf{H}),\hdots R_L(\mathbf{H})\big)$ to as the instantaneous rate vector corresponding to the joint fade state $\mathbf{H}$. The boundary points of the minimum-rate capacity region corresponds to those rates vectors such that no user can unilaterally increase its rate while the rates of other users remain fixed, and remain within the minimum rate capacity energy region. Owing to the convexity of the the region $\mathcal{C}(\Delta)$, the boundary points can be obtained by solving the following optimization problem. We consider the following optimization problem : 
 \begin{flalign*}
\hspace{20pt}
&\max_{\mathbf{T}^s(.)} ~ \sum\limits_{i=1}^L \mu(i)\Big( \mathbb{E}_{\mathbf{H}}\big[R_i(\mathbf{H})\big] \Big),\\
& \text{subject to}~ \mathbb{E}_\mathbf{H}[T^s_i(\mathbf{H})] \leq \mathbb{E}[Y^s(i)], ~ \forall i, \\
& R_i(\mathbf{h}) \geq \rho(i),~\forall ~\mathbf{h},~ i,\\
&\sum\limits_{i=1}^{L} \mathbb{E}_\mathbf{H}\big[H(i)T_i^s(\mathbf{H})\big] \geq \frac{\Delta}{\eta}.
\end{flalign*}
We can make use of the following lemma (which is a variant of \textit{Lemma} 2, \cite{goyal2002optimal}) so as to compute the boundary points of the minimum-rate capacity region. 

\begin{lem}\emph{(The Lagrangian Condition):}
\label{lag_goyal}
Let $\mathbf{T}^*(.)$ be the solution to the optimization problem stated. For a given $\bm{\mu}$, $\mathbf{T}^*(.)$ is a solution to the optimization problem if and only if there exists {a $\bm{\lambda} \triangleq \big(\lambda_s(1),\hdots,\lambda_s(L),\lambda_r\big),~ \lambda_s(i),~\lambda_r \in \mathbb{R}^+$ and an energy management policy $\mathbf{T}^s(.)$ such that for every joint fade state $\mathbf{h}$, $\mathbf{T}(\mathbf{h})$ is a solution to the optimization problem 
\begin{flalign*}
&\max_{\mathbf{t}^s} \sum_{i=1}^L \mu(i) r(i) -\lambda_s(i) t^s(i)+\lambda_r h(i)t^s(i), \\
& \emph{s.t} ~ \mathbf{r} \in \mathcal{C}\big(\mathbf{h}, \mathbf{t}^s \big),~\mathbb{E}_{\mathbf{H}}\big[R_i(\mathbf{H})\big] =R^*(i),\\
&~\mathbb{E}_{\mathbf{H}}\big[H(i)T_i^s(\mathbf{H})\big]\geq \Delta/\eta,  ~\mathbb{E}_{\mathbf{H}}\big[T_i^s(\mathbf{H})\big]=\mathbb{E}\big[Y^s(i)\big],
\end{flalign*}
where, }
\begin{flalign*}
\hspace{-20pt}
\mathcal{C}\big(\mathbf{h}, \mathbf{t}^s \big)  \triangleq  \Big\{\mathbf{r}:~ & \bm{\rho}(A) \leq \mathbf{r}(A) \leq c_A(\mathbf{h},\mathbf{t}),~ A \subset [1:L]\Big\},
\end{flalign*}
$c_A(\mathbf{h},\mathbf{t})\triangleq \frac{1}{2}\log\Big(1+\frac{1}{\sigma^2}\sum\limits_{i \in A}h(i)t^s(i)\Big)$, $R^*(i)$ corresponds to the average rate of user $i$ adopting the optimal energy management policy $T^*_i(.)$ and $\mathbf{r}(A)=\sum\limits_{i \in A}r(i)$.\qed 
\end{lem}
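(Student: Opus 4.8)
The plan is to recognize this as a standard Lagrangian duality result for a convex optimization problem with linear coupling constraints, specialized to the structure of the fading GMAC. The key observation is that the original problem maximizes a concave objective $\sum_i \mu(i)\mathbb{E}_{\mathbf{H}}[R_i(\mathbf{H})]$ over the convex constraint set determined by the instantaneous capacity regions $\mathcal{C}(\mathbf{h},\mathbf{t}^s)$ together with the \emph{averaged} (and hence coupling-across-fade-states) constraints on transmit energy and harvested power. Because the capacity region $\mathcal{C}(\mathbf{h},\mathbf{t}^s)$ is a polymatroid and the feasible rate set is convex, strong duality holds, and the problem decouples across fade states once we dualize the averaged constraints. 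I would follow the template of \emph{Lemma} 2 of \cite{goyal2002optimal}, adapting it to include the receiver energy-transfer constraint $\sum_i \mathbb{E}_{\mathbf{H}}[H(i)T_i^s(\mathbf{H})] \geq \Delta/\eta$.

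First I would form the Lagrangian by attaching multipliers $\lambda_s(i) \in \mathbb{R}^+$ to each average transmit-power constraint $\mathbb{E}_{\mathbf{H}}[T_i^s(\mathbf{H})] \leq \mathbb{E}[Y^s(i)]$ and a single multiplier $\lambda_r \in \mathbb{R}^+$ to the aggregate power-transfer constraint. The minimum-rate constraints $R_i(\mathbf{h}) \geq \rho(i)$ are \emph{per-fade-state} and pointwise, so rather than dualizing them I would carry them inside the inner feasible region $\mathcal{C}(\mathbf{h},\mathbf{t}^s)$, which is exactly why $\bm{\rho}(A) \leq \mathbf{r}(A)$ appears in the definition of that set. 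After substituting the Lagrangian and interchanging the order of expectation and maximization, the averaged objective becomes an expectation over $\mathbf{H}$ of a per-fade-state functional, and the problem separates: for each $\mathbf{h}$ one maximizes
\begin{flalign*}
\sum_{i=1}^L \mu(i) r(i) - \lambda_s(i) t^s(i) + \lambda_r h(i) t^s(i)
\end{flalign*}
over $\mathbf{t}^s$ and $\mathbf{r} \in \mathcal{C}(\mathbf{h},\mathbf{t}^s)$. This yields the inner optimization problem in the statement.

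For the forward direction (optimal $\Rightarrow$ existence of multipliers) I would invoke the Karush–Kuhn–Tucker conditions: since the problem is convex with a finite optimum and a suitable constraint qualification holds (the feasibility assumption $\mathbb{E}[1/H(i)] < \infty$ from the preceding discussion guarantees a strictly feasible point, so Slater's condition applies), there exist nonnegative multipliers $\bm{\lambda}$ for which any optimal $\mathbf{T}^*(.)$ maximizes the Lagrangian; complementary slackness then forces the energy constraints to hold with the stated equalities at optimum. For the converse (existence of multipliers $\Rightarrow$ optimal), I would use the standard weak-duality saddle-point argument: if $\mathbf{T}^*(.)$ maximizes the decoupled Lagrangian for some $\bm{\lambda} \geq 0$ and is primal-feasible with the complementary-slackness equalities satisfied, then for any other feasible policy the duality gap is nonpositive, so $\mathbf{T}^*(.)$ is primal-optimal.

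The main obstacle I anticipate is justifying the interchange of expectation and pointwise maximization — i.e., showing that maximizing the averaged Lagrangian is equivalent to maximizing the integrand $\mathbf{h}$-by-$\mathbf{h}$. This requires a measurable-selection argument so that the per-fade-state optimizers assemble into an admissible (measurable) Markov policy $\mathbf{T}^s(\mathbf{H})$, and it requires the averaged constraints $\mathbb{E}_{\mathbf{H}}[R_i(\mathbf{H})] = R^*(i)$ to be reconciled with the purely local inner maximization (the equality constraint on the average rate is what ties the decoupled solutions back to a common operating point on the boundary). I would handle this by appealing to the convexity and compactness of the instantaneous polymatroid $\mathcal{C}(\mathbf{h},\mathbf{t}^s)$ (which ensures the supremum is attained) together with a standard measurable-selection theorem, closely mirroring the treatment in \cite{tse1998multiaccess} and \cite{goyal2002optimal}; the remaining verification of stationarity and complementary slackness is then routine.
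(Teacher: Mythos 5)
The paper gives no proof of this lemma at all: it is stated with a terminal \qed{} and justified only as ``a variant of Lemma 2'' of \cite{goyal2002optimal}, whose proof is exactly the Lagrangian-duality argument you sketch. Your proposal is therefore correct in outline and takes essentially the same route the paper implicitly relies on (dualize the averaged transmit-energy and power-transfer constraints, keep the per-fade-state minimum-rate constraints inside the instantaneous region $\mathcal{C}(\mathbf{h},\mathbf{t}^s)$, decouple across fade states), and in fact supplies more detail---the Slater/KKT forward direction, the weak-duality converse, and the measurable-selection issue---than the paper itself provides.
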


\section{Fading GMAC with Time-Switching Receiver}
\label{S_TS}
{In this section, we characterize the minimum-rate \textit{erasure} capacity region of a fading GMAC with energy harvesting constraints, for the time-switching receiver case. In our setting, transmitters are oblivious to the realizations of the $\{T_k^r\}$ process at the receiver. Thus, we do not assume any coordination between the transmitters and the receiver in deciding when to harvest RF energy. We derive the rate regions with this assumption in place. In particular, we will show that the receiver can independently and identically decide to harvest energy or not from the incoming data stream. We will prove, this alone suffices to mitigate the energy outages in the system, asymptotically. As harvesting energy results in the erasure of data symbols, we model the resultant channel as a fading GMAC with energy harvesting transmitters and receiver in conjunction with an erasure channel. The decrease in channel capacity due to erasures at the output of a noisy channel was previously characterized in \cite{verdu2008information}. Let $\pi_{\mathcal{E}}=\Delta/\sum\limits_{i=1}^{L} \mathbb{E}_\mathbf{H}\big[\eta H(i)T_i^s(\mathbf{H})\big]$  be the probability with which receiver harvests from RF stream independently across slots and let $\pi^c_{\mathcal{E}}=1-\pi_{\mathcal{E}}$.} 
\subsection{  Erasure Capacity Region with Time Switching Receiver}
\label{tsr_section}
 Let 
\begin{flalign*}
\hspace{-20pt}
\mathcal{C}^e_t(\mathbf{T}^s)  \triangleq & \Big\{\mathbf{R}:~  \bm{\rho}(A) \leq \mathbf{R}(A) \leq  \mathbb{E}_\mathbf{H}\Big[\mathbf{C}_{t,A}(\mathbf{H})\Big],~\forall A\Big\},
\end{flalign*}
where $\mathbf{C}_{t,A}(\mathbf{H})\triangleq \frac{\pi_{\mathcal{E}}^c}{2}\log\Big(1+\frac{1}{\sigma^2}\sum\limits_{i \in A}H(i)T_i^s(\mathbf{H})\Big)$,  $A \subset [1:L]$. We have the following result.

\begin{thm}
\label{Th_TSR}
\emph{(Minimum-Rate Erasure Capacity Region)}: 
\begin{flalign*}
\hspace{35pt}\mathcal{C}_t^e(\Delta) = \overline{\text{Conv}}\Bigg(\bigcup_{\mathbf{T}^s \in \mathcal{T}^s_m(\Delta)} \mathcal{C}_t^e(\mathbf{T}^s)\Bigg),
\end{flalign*}
is the minimum rate erasure-capacity region.
\end{thm}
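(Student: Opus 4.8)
The plan is to establish both inclusions—achievability and converse—by viewing the time-switching channel as the ideal GMAC of Theorem~\ref{cap_region} concatenated with a memoryless erasure channel, and then importing the erasure-capacity reduction of \cite{verdu2008information}. The key modelling device is to let the receiver decide whether to harvest or to decode in slot $k$ according to an exogenous i.i.d. Bernoulli$(\pi_{\mathcal{E}})$ sequence $\{\mathcal{E}_k\}$ that is independent of the messages, channel inputs, fading and noise, rather than through the buffer event $\{\hat{E}_k^r<T_k^r\}$ of Section~\ref{TSR}. When $\mathcal{E}_k=1$ the symbol $W_k$ is erased and $\xi_k$ is harvested; when $\mathcal{E}_k=0$ the receiver reads $W_k$ and harvests only $Y_k^r$. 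Since an erasure is, by definition, flagged at the decoder, the decoder always knows the set of surviving coordinates.

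For achievability I would fix any $\mathbf{T}^s\in\mathcal{T}^s_m(\Delta)$ and first check receiver energy neutrality. Because $\{(\mathbf{X}_k,W_k,\xi_k)\}$ is jointly AMS and ergodic and $\{\mathcal{E}_k\}$ is i.i.d. and independent of it, the joint process is AMS ergodic, so by the ergodic theorem the long-run RF energy harvested per slot converges to $\pi_{\mathcal{E}}\mathbb{E}[\xi]=\pi_{\mathcal{E}}\sum_{i}\mathbb{E}_\mathbf{H}[\eta H(i)T_i^s(\mathbf{H})]=\Delta$, which is exactly the deficit; with an infinite buffer and finite initial charge this drives the fraction of energy-outage slots to zero, so asymptotically every non-erased symbol is genuinely read. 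Within a coherence block of state $\mathbf{h}$, the non-erased positions form a fraction tending to $\pi_{\mathcal{E}}^c$, and conditioned on the erasure pattern the surviving sub-channel is precisely the GMAC of the ideal model; running the codebook construction of Theorem~\ref{cap_region} over the surviving positions then realizes, per block, any rate vector whose $A$-sums lie between $\bm{\rho}(A)$ and $\pi_{\mathcal{E}}^c\,c_A(\mathbf{h},\mathbf{t}^s)$, feasibility of the lower floor being guaranteed by $\mathbf{T}^s\in\mathcal{T}^s_m(\Delta)$. Averaging over $\mathbf{H}$ and taking the closed convex hull over $\mathbf{T}^s\in\mathcal{T}^s_m(\Delta)$ reproduces $\mathcal{C}_t^e(\Delta)$.

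For the converse I would begin from a code sequence with $P_e^{(n)}\to0$ and apply Fano's inequality, but now the decoder's observation is the erased output. Because the message is independent of $\{\mathcal{E}_k\}$, the usable information is carried only on the non-erased coordinates, and the single-letterization used for Theorem~\ref{cap_region} yields $\mathbf{R}(A)\le\pi_{\mathcal{E}}^c\,\mathbb{E}_\mathbf{H}[\mathbf{C}_A(\mathbf{H})]=\mathbb{E}_\mathbf{H}[\mathbf{C}_{t,A}(\mathbf{H})]$, where $\pi_{\mathcal{E}}^c$ is the limiting non-erasure fraction. The genuinely new ingredient is a matching lower bound on the erasure fraction: to meet the average deficit $\Delta$ the receiver must harvest at least $\Delta$ RF units per slot on average, and since an erased slot contributes $\xi_k$ with $\mathbb{E}[\xi]=\sum_i\mathbb{E}_\mathbf{H}[\eta H(i)T_i^s(\mathbf{H})]$, the long-run erasure fraction cannot fall below $\pi_{\mathcal{E}}$. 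Combining the two bounds closes the converse.

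I expect the main obstacle to be the converse coupling between energy and rate—precisely, arguing that no buffer-driven or CSI-correlated erasure rule can outperform the exogenous i.i.d. rule. A receiver that knows $\mathbf{H}_k$ could, in principle, preferentially erase high-fading slots so as to harvest $\mathbb{E}[\xi]$ from fewer erasures and thereby lose fewer data symbols; I would need to show that any such gain is already captured by re-optimizing the power policy $\mathbf{T}^s$ inside the union over $\mathcal{T}^s_m(\Delta)$, so that the deficit rigidly forces the erasure fraction up to $\pi_{\mathcal{E}}$ for the optimizing policy. A second, more routine, difficulty is justifying that the exogenous i.i.d. rule is a legitimate harvesting policy under the energy recursion of Section~\ref{TSR} and that the per-state minimum rate $\rho(i)$ survives the random within-block erasures, both of which require an ergodic/concentration argument rather than a one-line claim.
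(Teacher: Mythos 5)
Your proposal follows the paper's route almost exactly on the achievability side: model the time-switching receiver as the ideal fading GMAC of Theorem~\ref{cap_region} concatenated with a memoryless erasure channel driven by an exogenous i.i.d. Bernoulli$(\pi_{\mathcal{E}})$ switch independent of messages, inputs, fading and noise, then rerun the random-coding argument of Appendix A with each user's rate scaled down by $\pi_{\mathcal{E}}^c$ (encoders ignorant of erasure locations, decoder discarding erased outputs). Where you genuinely diverge is the converse. The paper's converse is a genie argument in the style of \cite{verdu2008information}: the encoders are granted non-causal knowledge of the erasure locations, send zero symbols in those slots, and the erasure-capacity argument of \cite{verdu2008information} is extended to the MAC; crucially, the i.i.d. switching rule is treated as part of the model (Section~\ref{S_TS} assumes no transmitter--receiver coordination and posits that the receiver decides independently and identically whether to harvest), so the paper never needs to rule out buffer-driven or CSI-correlated harvesting rules. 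You instead attempt a stronger converse --- Fano plus an energy-accounting bound forcing the long-run erasure fraction up to $\pi_{\mathcal{E}}$ over \emph{all} harvesting rules --- and you correctly identify the obstruction: a receiver that erases preferentially in high-$\xi$ slots can meet the deficit $\Delta$ with fewer erasures, so the simple per-slot accounting does not close. That obstacle is real, but it lies outside the theorem as the paper formulates it; restricted to the i.i.d. rule, your Fano/independence argument suffices and your proposal is complete, whereas the stronger claim you aim at is not established by the paper either. One wrinkle you share with the paper: both set $\pi_{\mathcal{E}}\mathbb{E}[\xi]=\Delta$ exactly, under which the receiver's buffer is a zero-drift random walk and the outage fraction need not vanish; the clean fix is to take $\pi_{\mathcal{E}}\mathbb{E}[\xi]=\Delta+\epsilon$ and let $\epsilon \downarrow 0$.
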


\begin{proof}
See Appendix B.
\end{proof}

\begin{rem}
The total mean harvested energy at the time switching receiver, under stationarity, is $\mathbb{E}[Y^r]+\pi_{\mathcal{E}}\mathbb{E}[\xi]$. The choice of $\pi_{\mathcal{E}}$ is made so as to ensure the receiver, on an average, harvests more energy than it uses up. The potential rate loss due to the harvesting constraint at receiver is captured in the erasure probability term $\pi_{\mathcal{E}}$.
\end{rem}

\begin{rem}
We can obtain an explicit characterization of the above minimum-rate erasure capacity region by obtaining a Lagrangian condition as in Lemma \ref{lag_goyal}. As the proof steps are identical, we do not restate it.
\end{rem}
\section{Fading GMAC with Constant Fraction Power-Splitting Receiver}
\label{S_PS}
In this section, we extend our previous discussions to the case of a fading GMAC with energy harvesting constraints and power-splitting receiver introduced in Section \ref{PSR}. Define $\pi_{\mathcal{E}}$ as in the previous section. 
\subsection{ Capacity-Energy Region with Power Splitting Receiver: Constant Fraction Splitting}
 We define 
\begin{flalign*}
\hspace{-20pt}
\mathcal{C}^e_p(\mathbf{T}^s)  \triangleq  \Big\{\mathbf{R}:~ & \bm{\rho}(A) \leq \mathbf{R}(A) \leq  \mathbb{E}_\mathbf{H}\Big[\mathbf{C}_{p,A}(\mathbf{H})\Big],~\forall A\Big\},
\end{flalign*}
$\mathbf{C}_{p,A}(\mathbf{H}) \triangleq \frac{1}{2}\log\Big(1+\frac{\pi_{\mathcal{E}}^c}{\sigma^2}\sum\limits_{i \in A}H(i)T_i^s(\mathbf{H})\Big)$, $ A \subset [1:L]$. We have the following result.

\begin{thm}
\label{Th_PSR1}
\emph{(Minimum-Rate Capacity Region via Constant Fraction Splitting)}: The closure of
\begin{flalign*}
\hspace{35pt}\mathcal{C}_{p}^e(\Delta) = \overline{\text{Conv}}\Bigg(\bigcup_{\mathbf{T}^s \in \mathcal{T}^s_m(\Delta)} \mathcal{C}_p^e(\mathbf{T}^s)\Bigg),
\end{flalign*}
 is the minimum-rate capacity region with constant fraction power splitting receiver.
\end{thm}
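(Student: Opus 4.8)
The plan is to reduce Theorem~\ref{Th_PSR1} to the ideal-receiver result of Theorem~\ref{cap_region}, exploiting the equivalence noted in Section~\ref{PSR}: a constant-fraction power-splitting receiver that diverts a fraction $\pi_{\mathcal{E}}$ of the received power to the rectenna presents its communication module with the attenuated signal $\tilde{W}_k=\sum_{i=1}^{L}\sqrt{\pi^c_{\mathcal{E}}H_k(i)}X_k(i)+N_k$. Hence, as far as decoding is concerned, the users communicate over an equivalent fading GMAC with fading coefficients $\sqrt{\pi^c_{\mathcal{E}}H_k(i)}$, under the \emph{same} transmitter energy-harvesting constraints and the \emph{same} feasible policy set $\mathcal{T}^s_m(\Delta)$. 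Since substituting $H(i)\mapsto\pi^c_{\mathcal{E}}H(i)$ turns $\mathbf{C}_A(\mathbf{H})$ into $\mathbf{C}_{p,A}(\mathbf{H})$, the target region $\mathcal{C}^e_p(\Delta)$ is exactly the ideal-receiver region of this equivalent channel, so I would reuse the coding machinery of Theorem~\ref{cap_region} once the receiver's energy feasibility is in place.

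For achievability I would fix $\mathbf{T}^s\in\mathcal{T}^s_m(\Delta)$ and first settle the receiver energy balance. Because the users' codewords are independent and zero-mean, the cross terms in $\xi_k=\eta\big(\sum_i\sqrt{H_k(i)}X_k(i)\big)^2$ vanish in expectation, giving $\mathbb{E}[\xi]=\sum_{i=1}^{L}\mathbb{E}_{\mathbf{H}}\big[\eta H(i)T_i^s(\mathbf{H})\big]$ under stationarity. With $\pi_{\mathcal{E}}$ chosen as defined, the split-off energy obeys $\pi_{\mathcal{E}}\mathbb{E}[\xi]=\Delta$, while the constraint $\sum_i\mathbb{E}_{\mathbf{H}}[H(i)T_i^s(\mathbf{H})]\geq\Delta/\eta$ built into $\mathcal{T}^s_m(\Delta)$ ensures $\pi_{\mathcal{E}}\leq 1$, so the split is admissible. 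The mean net energy inflow at the receiver is therefore $\mathbb{E}[Y^r]+\pi_{\mathcal{E}}\mathbb{E}[\xi]-\mathbb{E}[T^r]=\mathbb{E}[Y^r]+\Delta-\mathbb{E}[T^r]\geq 0$. Using the joint AMS, ergodic property of $\{(\mathbf{X}_k,W_k,T^r_k,E^r_k,Y^r_k)\}$ from Section~\ref{S_Prel} and a stability argument on the recursion $E^r_{k+1}=\hat{E}^r_k+\pi_{\mathcal{E}}\xi_k-\mathcal{E}^c_k(p)T^r_k$, I would show that $\frac{1}{n}\sum_{k=1}^n\mathcal{E}_k(p)\to 0$ irrespective of the initial buffer state. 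Thus the induced erasures are asymptotically negligible and, unlike the time-switching region of Theorem~\ref{Th_TSR}, incur \emph{no} multiplicative rate penalty $\pi^c_{\mathcal{E}}$; the only surviving effect is the attenuation inside the logarithm. Applying the achievability half of Theorem~\ref{cap_region} to the equivalent channel then realizes every rate vector in $\mathcal{C}^e_p(\mathbf{T}^s)$, and the union over policies followed by time-sharing gives the closed convex hull.

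For the converse I would start from an arbitrary achievable rate vector, apply Fano's inequality to the equivalent channel, and carry out the standard Gaussian-MAC outer bound: the normalized mutual information between the inputs of any subset $A\subset[1:L]$ and the output $\tilde{W}^n$, conditioned on the complementary inputs, is maximized by Gaussian inputs and, by concavity of the logarithm together with the memoryless, AMS, ergodic structure, is bounded by $\mathbb{E}_{\mathbf{H}}[\mathbf{C}_{p,A}(\mathbf{H})]$, yielding $\mathbf{R}(A)\leq\mathbb{E}_{\mathbf{H}}[\mathbf{C}_{p,A}(\mathbf{H})]$. The minimum-rate guarantee $R_i(\mathbf{h})\geq\rho(i)$ supplies the lower bounds $\bm{\rho}(A)\leq\mathbf{R}(A)$, the average power budgets place the underlying policy in $\mathcal{T}^s_m(\Delta)$, and the receiver energy-deficit requirement forces $\pi_{\mathcal{E}}\mathbb{E}[\xi]\geq\Delta$. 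These locate the rate vector in $\mathcal{C}^e_p(\mathbf{T}^s)$ for some feasible $\mathbf{T}^s$, hence in $\mathcal{C}^e_p(\Delta)$.

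The only genuinely new content is the receiver-buffer analysis, which is structurally identical to that in the proof of Theorem~\ref{cap_region} with the full harvested quantity $\xi_k$ replaced by $\pi_{\mathcal{E}}\xi_k$. Since $\{T^r_k\}$, $\{Y^r_k\}$ and $\{\xi_k\}$ are only stationary and ergodic (not independent) and the buffer is arbitrarily initialized, one cannot invoke an i.i.d.\ law of large numbers; instead the AMS, ergodic framework of \cite{gray2011entropy} and a Loynes-type drift argument are needed to conclude that a buffer with non-negative mean drift is empty only a vanishing fraction of the time. I expect verifying that this argument transfers verbatim under $\xi_k\mapsto\pi_{\mathcal{E}}\xi_k$, and checking feasibility $\pi_{\mathcal{E}}\leq 1$, to be the main thing to watch; the coding-theoretic part then follows mechanically from Theorem~\ref{cap_region} via the gain substitution $H(i)\mapsto\pi^c_{\mathcal{E}}H(i)$.
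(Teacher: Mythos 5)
Your proposal is correct and takes essentially the same route as the paper: the paper's entire proof of Theorem~\ref{Th_PSR1} is the observation that, up to the SNR rescaling $H(i)\mapsto\pi^c_{\mathcal{E}}H(i)$, the argument of Theorem~\ref{cap_region} carries over verbatim. Your extra checks (feasibility $\pi_{\mathcal{E}}\leq 1$ from the constraint in $\mathcal{T}^s_m(\Delta)$, and the receiver-buffer drift argument showing erasures are asymptotically negligible) simply make explicit details the paper leaves implicit.
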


\begin{proof}
Except for a re-scaling of the signal to noise ratio at the receiver, the proof follows along the lines of the proof of Theorem \ref{cap_region}.
\end{proof}

\section{Numerical Results}
\label{S_NR}
In this section, we compare the capacity region with minimum rate constraints of a fading GMAC with energy harvesting transmitters and receiver, for the ideal, time-switching and power-splitting receive models. Consider a two user GMAC with average harvested energy of user 1 per slot, $\mathbf{E}[Y^s(1)]=5$W and of user 2, $\mathbf{E}[Y^s(2)]=3$W. The fading distribution at each user is a discrete distribution obtained as follows: Consider Rayleigh distribution with parameter $\alpha=1$. Fix a quantization size $q=.1$ and a maximum fading coefficient $h_{\text{max}}=5$. The support set of the distribution is taken to be $\big[q,2q,\hdots, [h_{\text{max}}]\big]$, where $[h_{\text{max}}]$ is $h_{\text{max}}$ approximated to the greatest integer multiple of $q$ less than or equal to $h_{\text{max}}$. The probability $p_{h}(i)$ of the $i^{\text{th}}$ element of the support set is fixed to be the probability of a Rayleigh random variable taking values in the interval $[(i-1)q,~iq]$. We consider i.i.d fading and fading is independent across users. Let $\rho(1)=0.3$ bits/channel use,  $\rho(2)=0.2$ bits/channel use. Let the average energy harvested from the ambient source at the receiver $\mathbb{E}[Y^r]=10\mu$W and the average energy requirement $\mathbb{E}[T^r]=20 \mu$W. Hence, the energy deficit at the receiver $\Delta=10\mu$W=-20dBm. The efficiency factor $\eta$ is fixed to $10^{-5}$. Each time slot is fixed to be of $1\mu$ second duration.

{With system parameters fixed as above, we compute the minimum-rate capacity region for various receiver models under consideration (Figure \ref{rate_reg}). We compare these regions with  the capacity region without minimum rate constraints for the ideal receiver and the rate achievable \textit{without} RF power transfer. It is observed that RF power transfer enhances the rate region considerably. Also, for the same system parameters, the SWIPT system with power splitting receiver achieves better \textit{throughput} compared to that with time switching receiver. }
\begin{figure}[h]
\begin{center}
\includegraphics[scale=0.45]{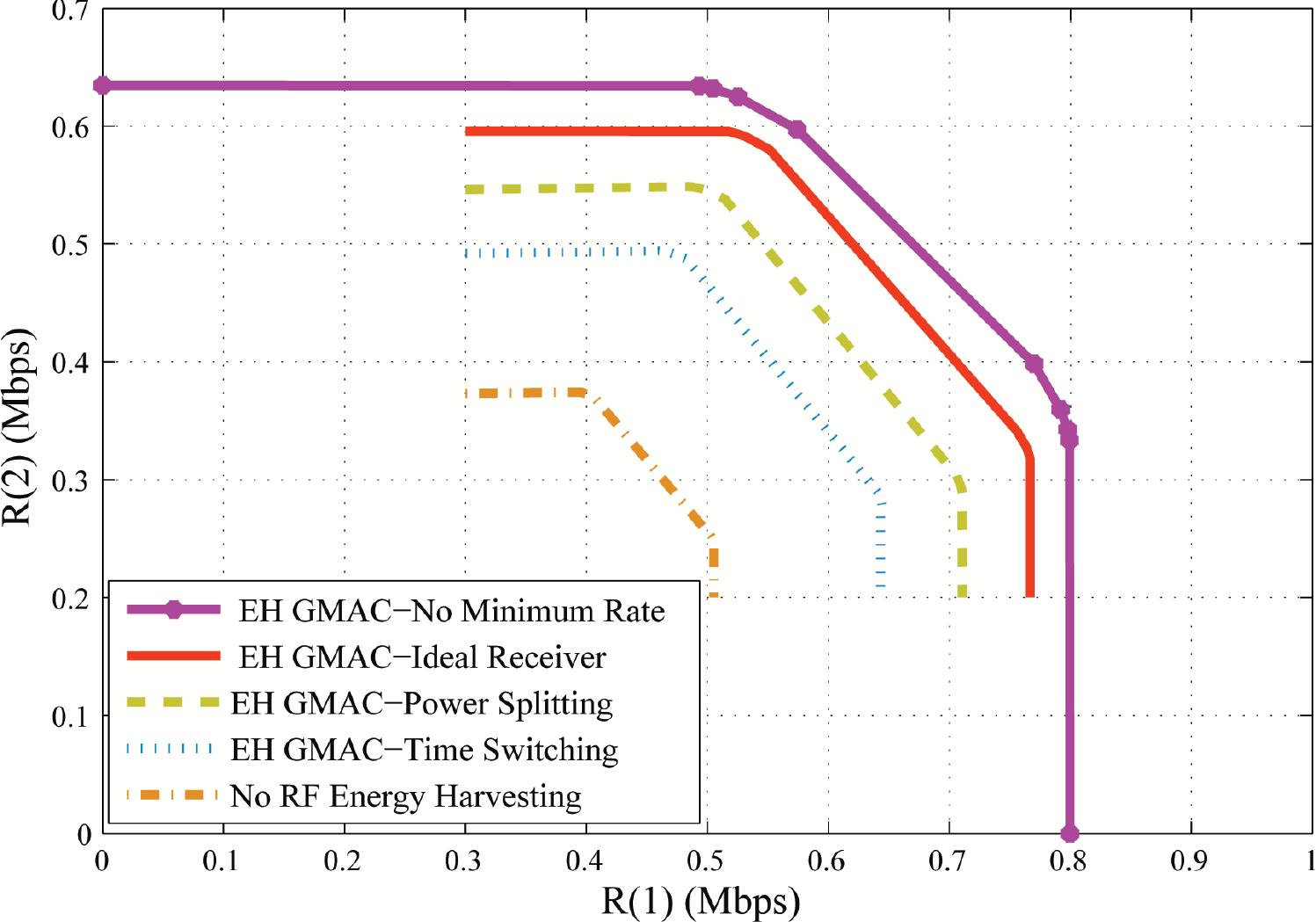}
\caption{Comparison of capacity regions for various receiver architectures } \label{rate_reg}
\end{center}
\end{figure}

Next, in Figure \ref{sum_rate}, we plot the maximum sum rates achievable with each of the receiver models. We fix the minimum rates $\rho(1)=\rho(2)=.1$ bits/channel use and $\mathbb{E}[Y^s(1)]=\mathbb{E}[Y^s(2)]$. In Figure \ref{sum_rate_delta}, we provide a comparison of the sum rates versus the energy deficit at the receiver. Enhancement of the sum rate due to RF power transfer, in particular, using power splitting receiver is observed. The dominance of power-splitting receivers can be attributed to the concavity of the optimal rates as a function of the power used for coding. We also plot the shrinkage in rate-region for the power-splitting and time-switching receivers as the energy deficit at the receiver increases (Figure \ref{rr_PS_D} and Figure \ref{rr_TS_D}, respectively). 
\begin{figure}[h]
\begin{center}
\includegraphics[scale=0.45]{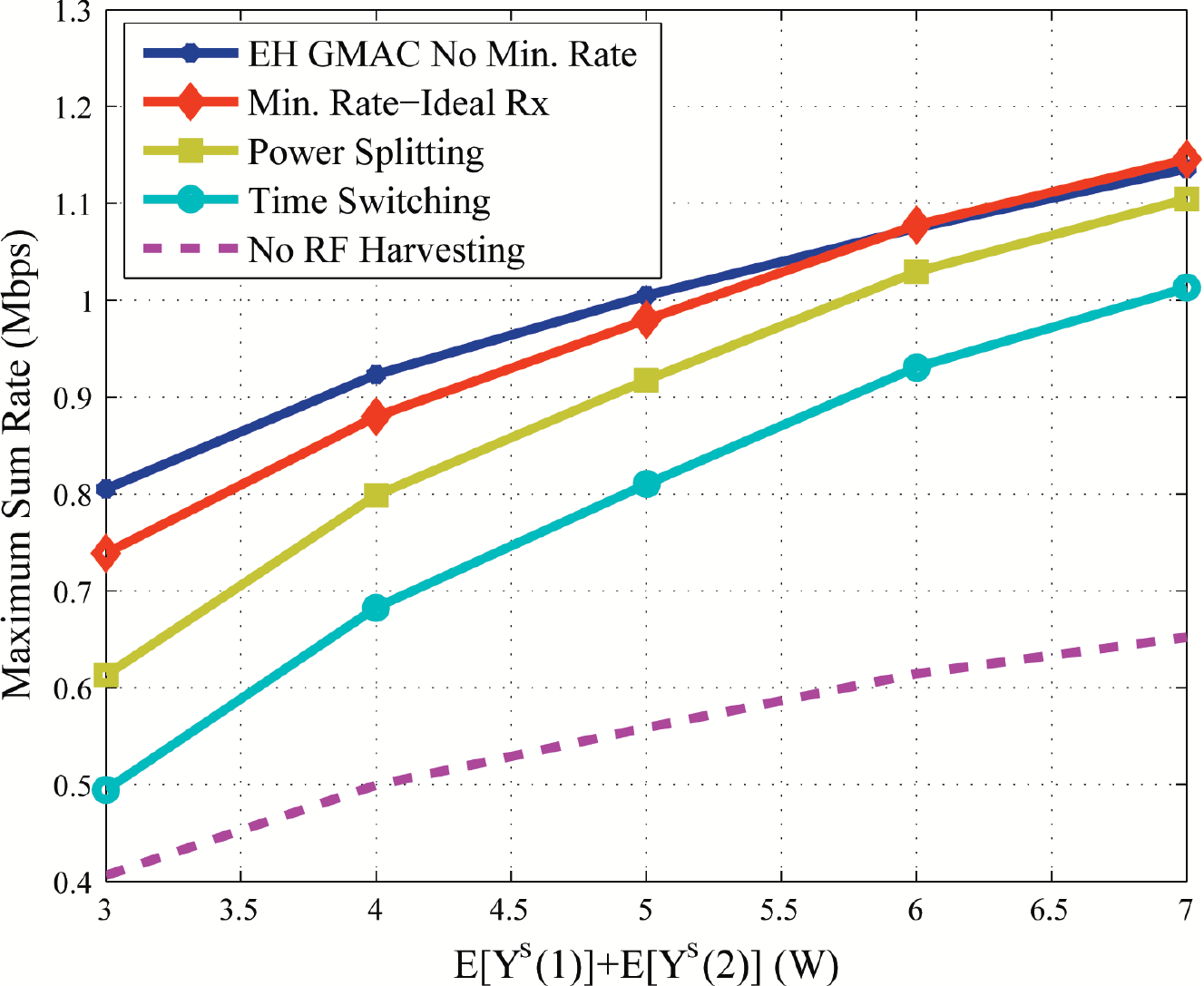}
\caption{Comparison of sum rates for various receiver architectures} \label{sum_rate}
\end{center}
\end{figure}   

\begin{figure}[h]
\begin{center}
\includegraphics[scale=0.45]{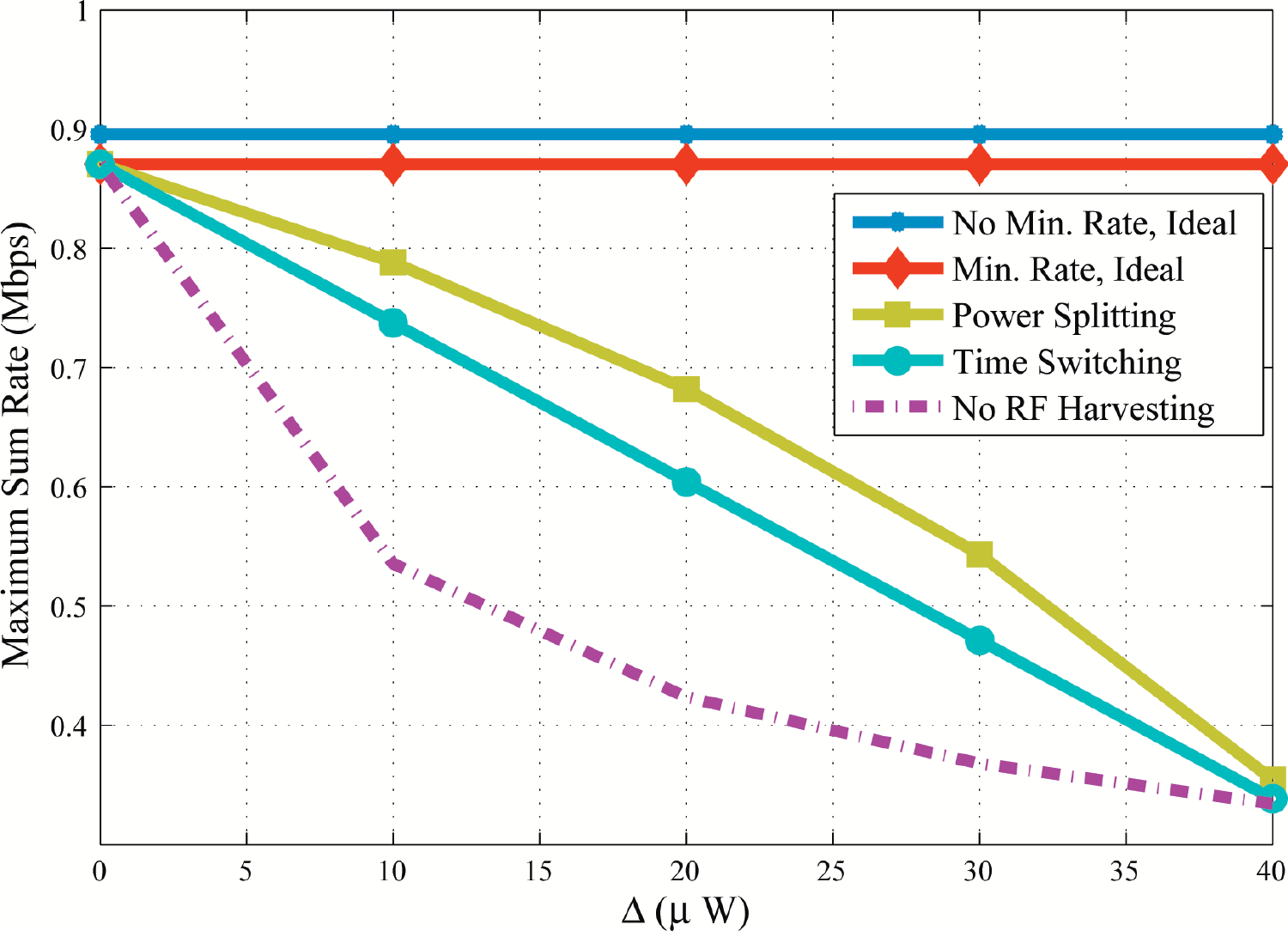}
\caption{Comparison of sum rates versus energy deficit at the receiver for various receiver architectures} \label{sum_rate_delta}
\end{center}
\end{figure}

\begin{figure}[h]
\begin{center}
\includegraphics[scale=0.45]{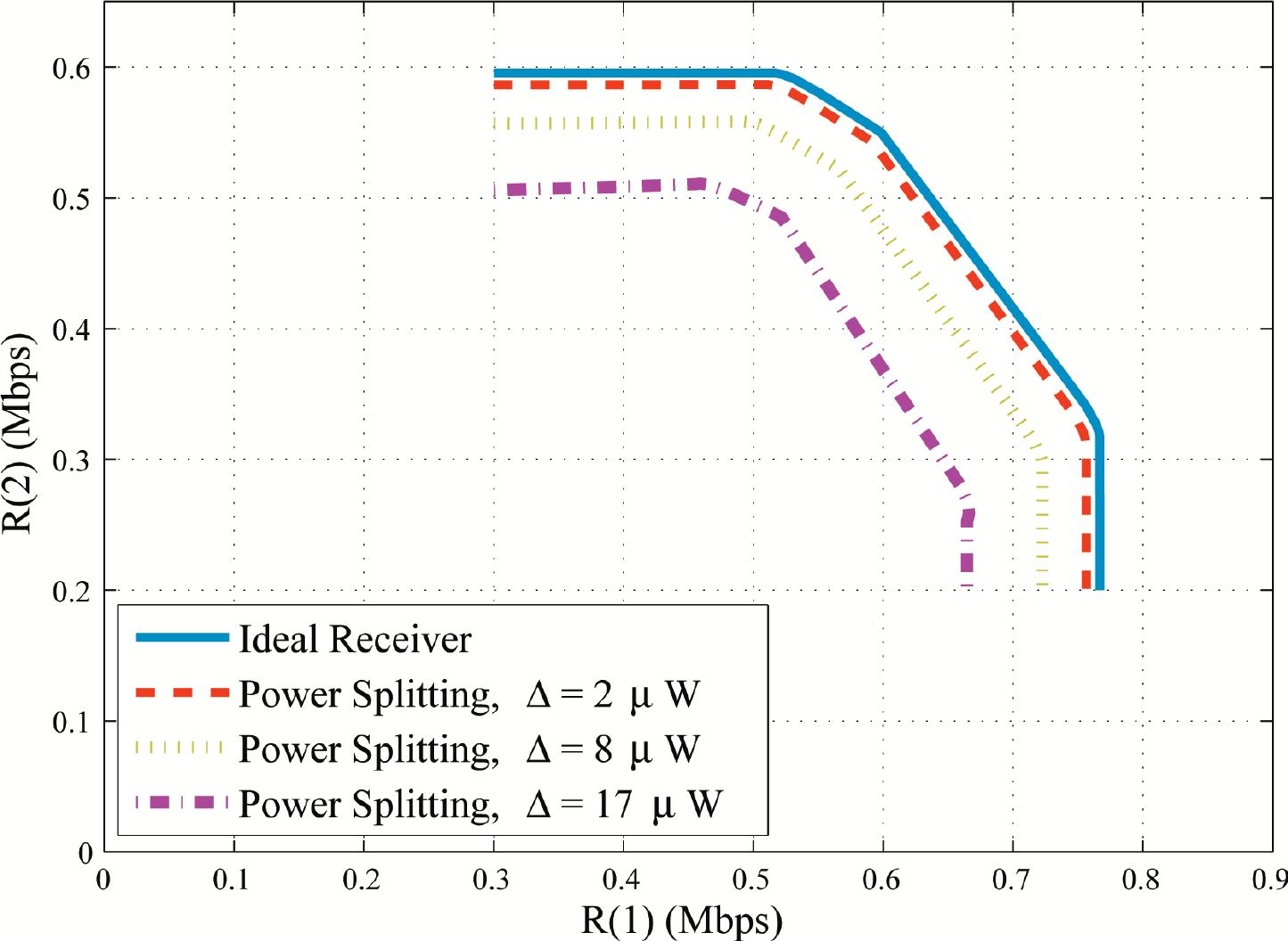}
\caption{Comparison of rate regions for the power-splitting receiver model for various values of energy deficit at the receiver} \label{rr_PS_D}
\end{center}
\end{figure}   

\begin{figure}[h]
\begin{center}
\includegraphics[scale=0.45]{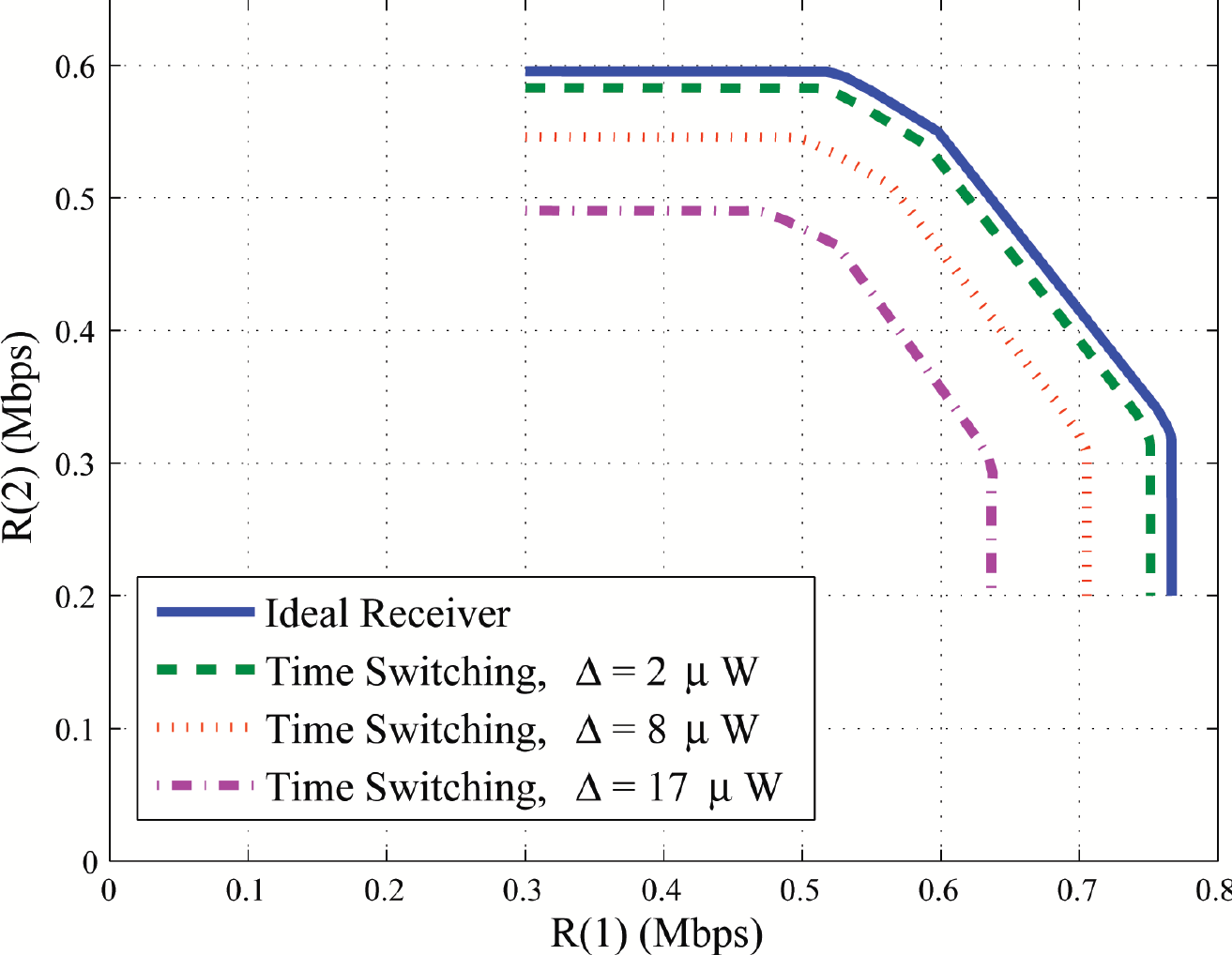}
\caption{Comparison of rate regions for the time-switching receiver model for various values of energy deficit at the receiver} \label{rr_TS_D}
\end{center}
\end{figure}

\section{Conclusion}
\label{S_Conc}
In this work, first we characterized the fundamental limits of communication when delay limited and delay sensitive data are simultaneously transmitted over a fading GMAC SWIPT system. We characterized the minimum-rate capacity region a) when the receiver is assumed to be ideal, b) when the receiver is time switching or power splitting. The results were obtained without any additional assumption on the synchronization between the transmitters and the receiver so as to harvest RF power at the receiver.  
\section*{Appendix A}
\textit{Proof of Theorem \ref{cap_region}:}\\
\textit{Achievability:}\\
\textit{Encoding:}
Fix a rate vector $\mathbf{R}$. In addition, we are given a minimum rate vector $\bm{\rho}$. As noted in \cite{tse1998multiaccess}, instead of generating codebooks corresponding to each joint fading state, we can generate fixed codebooks independent of the fading states. In any slot $k$, consider $\sqrt{T_k^s(i)H_k(i)}$ to be the fading process. At each transmitter $i$, codebooks $\mathcal{C}(i)$ are generated i.i.d according to the distribution $\mathcal{N}(0,1)$. Thus $\mathcal{C}(i)$ denotes a random Gaussian matrix of dimension $2^{nR_i}\times n$. These codebooks are shared among all transmitters and with the receiver.   We assume that all the transmitters and the receiver can track the realizations of the fading process $\big\{\mathbf{H}_k\big\}$. And hence, the receiver can make decoding decisions given the codebooks and fading realizations.  

Next, we define the following  specific Markov energy management policy (we call it the truncated Markov policy):
 \begin{displaymath}
   T_i^s(\mathbf{H}_k) = \left\{
     \begin{array}{lr}
       T'_i(\mathbf{H}_k) & :  T'_i(\mathbf{H}_k) \leq \hat{E}_k^s(i)\\
       \hat{E}_k^s(i) & : T'_i(\mathbf{H}_k) > \hat{E}_k^s(i),
     \end{array}
   \right.
\end{displaymath}
where $T'_i(\mathbf{H}_k)$ is the optimal power allocation policy  at transmitter $i$ for the fading multiple access channel without energy harvesting constraints and having minimum rate constraints. Characterization of $T'_i(H_k)$ can be obtained solving the optimization problem in Lemma \ref{lag_goyal}. The Lagrangian multipliers therein are obtained by  solving $\mathbb{E}\big[T'_i(\mathbf{H})\big]=\mathbb{E}[Y(i)]-\epsilon$, for some small $\epsilon>0$ for each $i$. The channel input of transmitter $i$ at time $k$  is
 \begin{flalign*}
 \hspace{-20pt}
 &X_{k}(i)= \text{sgn}\big(X_{k}'(i)\big) \min\Big(|X_{k}'(i)|\sqrt{T_{i}^s(\mathbf{H}_k)},\sqrt{\hat{E}_k^s(i)}\Big),
 \end{flalign*}
 where $\text{sgn}(x)$ is the following function:
 \begin{displaymath}
  \text{sgn}(x) = \left\{
     \begin{array}{lr}
      +1, & \text{for}~  x \geq 0,\\
      -1, & \text{for}~ x < 0.
     \end{array}
   \right.
\end{displaymath}

The codebooks generated have a average power constraint of $\mathbb{E}[Y(i)]-\epsilon$ at each transmitter. Whereas, the average energy replenished in each slot at transmitter $i$ is $\mathbb{E}[Y(i)]$. It follows that as $k \rightarrow \infty$, $E_k^s(i) \rightarrow \infty$ a.s. (refer Chapter 7, \cite{walrand1988introduction}). Hence, $|X_k(i) -X'(i) \sqrt{T'(i)}| \rightarrow 0$ a.s. as $k \rightarrow \infty$, where $X'(i) \sim \mathcal{N}(0,\sigma^2)$, $T'(i)=T_i'(.)$ as defined previously. Thus, $\big\{{X}_k(i)\big\}$ sequence is AMS, ergodic with stationary mean, the distribution of the process $\{\sqrt{T'_k(i)} X_k'(i),~k \geq 1\} $. 

\subsubsection*{Decoding}
The receiver is entrusted with harvesting energy and receiving data from the incoming RF signal. Since the receiver is assumed to be ideal, energy is harvested without corrupting the data symbol. The decoder adopts joint typicality decoding. Upon receiving $W^n$, the decoder scales  each of the codebook according to the corresponding energy management policy values (this can be done as the fading values are known to the decoder). The decoder finds a unique codeword vector $\big(X'^{n}(\hat{M}_1),\hdots, X'^{n}(\hat{M}_L)\big)$ such that $\Big(\big(X'^{n}(\hat{M}_1),\hdots, X'^{n}(\hat{M}_L)\big),W^{n}\Big) \in A_\epsilon^{(n)}$, where $ A_\epsilon^{(n)}$ is the set of all joint weakly $\epsilon-$typical sequences with respect to the stationary mean of the underlying joint AMS process.
If there exists, a unique such vector $\big(\hat{M}_1,\hdots \hat{M}_L\big)$, the decoder decides it to be the transmitted message vector. Else, an error is declared. 
\subsubsection*{Analysis of Error Events}
Owing to the symmetry of codebook construction, we can assume that $(M_1=1,\hdots,M_L=1)$ is s sent. Following conditional error events can happen:

$\mathbf{E1:}$ $\Big\{ \Big(\big(X'^{n}(1),\hdots X'^{n}(1)\big),W^{n}\Big) \notin A_\epsilon^{(n)}  \Big\}$. Since, AEP holds for AMS, ergodic sequences (\cite{barron1985strong}), we can show that the probability of the error event goes to zero asymptotically. In particular, finiteness of the conditional mutual information provided in the hypothesis of Theorem 3, \cite{barron1985strong} can be verified to hold good using the properties of the general entropy function provided Chapter 7, \cite{gray2011entropy}), non-negativity of mutual information and the fact that the random variables involved have finite second moment. Also, the stationary mean of the AMS process is absolutely continuous with respect an i.i.d Gaussian measure on a suitable Eucledian space. Thus conditions necessary for the AEP result to hold are satisfied.

$\mathbf{EA}:$ $\bigcup\limits _{(m_1,\hdots m_L)}\Big\{ \Big(\big(X'^{n}(\hat{m}_1),\hdots ,X'^{n}(\hat{m}_L)\big),W^{n}\Big) \in A_\epsilon^{(n)}: \hat{m}_k=1,~\text{for}~k\in A~ \text{and}~ \hat{m}_k\neq 1,~\text{for}~k\in A^c \Big\}$, for each $ A \subset [1:L]$. The decoding is done with respect to the stationary mean distribution of the corresponding underlying AMS process. Each finite dimensional distribution corresponding to the stationary mean is an i.i.d  distribution.  Noting this point, the analysis follows as in the standard case. It can be shown that probability of each of these conditional error events, indexed by each subset $A$ of $[1:L]$ can be driven to zero if $R(A) < \mathbb{E}_\mathbf{H}\Big[\frac{1}{2}\log\Big(1+\frac{1}{\sigma^2}\sum\limits_{i \in A}H(i)T_i^s(\mathbf{H})\Big)\Big]$. This proves the achievability result.

\subsection*{Converse:} To prove the converse part, assume that there exist codebooks, encoders and decoders such that $P^{(n)}_e$ (average probability of decoding error)  goes to zero as $n \rightarrow \infty$. We have that $\frac{1}{n} \sum\limits_{k=1}^{n}T_k^s(i) \leq \frac{1}{n} \sum\limits_{k=1}^{n}Y_k(i)\leq \mathbb{E}[Y(i)]+\epsilon $ for large $n$ and arbitrarily chosen small $\epsilon>0$. Hence, along the lines of the converse proof for fading GMAC without energy harvesting constraints (\cite{tse1998multiaccess}), we get $R(A) \leq \mathbb{E}\Big[\frac{1}{2}\log\Big(1+\frac{1}{\sigma^2}\sum\limits_{i \in A}H(i)T_i^s\big( \mathbf{H}\big)\Big)\Big] $ for any subset $A \subset [1:L]$. 

Finally, combining the direct and converse part, we get the required result. \qed

\section*{Appendix B}
\textit{Proof of Theorem \ref{Th_TSR}:}
Fix $\pi_{\mathcal{E}}$ as mentioned in the beginning of Section \ref{tsr_section}. The receiver switches between harvesting energy and receiving noise corrupted data symbol from the channel output randomly according to $\pi_{\mathcal{E}}$. In slot $k$, if the Bernoulli random variable $B_k=1$ (generated independent of all other random variables involved), the receiver harvests energy from the channel output symbol. Probability of this happening is $\pi_{\mathcal{E}}$. The channel output is then recorded as an erasure. Thus, a fading GMAC with time switching receiver, can be equivalently thought of as a fading GMAC with ideal receiver observed through an erasure channel. The erasure instances correspond to the energy harvesting instances at the receiver. The proof of achievability and converse of the minimum rate capacity-energy of a fading GMAC with the ideal receiver is provided in Appendix A. In addition, the capacity of a noisy channel observed through erasure channel is provided in \cite{verdu2008information}. 

The random coding achievability proof of Appendix A with the rate of codebook of transmitter $i$ chosen to be $ \mathbb{E}\Big[\frac{\pi_{\mathcal{E}}^c}{2}\log\Big(1+\frac{1}{\sigma^2}H(i)T_i^s\big( \mathbf{H}\big)\Big)\Big]-\epsilon$, for some small $\epsilon>0$ can be repeated to  show the achievability part. The encoders do not know the locations of erasures a priori. The decoder simply discards the erased channel output symbols. As the blocklength tends to $\infty$, the probability of decoding error events can be shown to go to zero, as in Appendix A.

To prove the converse, we can assume that the encoder has access to non-causal knowledge of erasure locations. The encoders can choose to send a zero symbol during the erasure instances. The decoder discards the erased channel outputs. The converse can be proved by extending in a standard way to MAC setting, the argument given in \cite{verdu2008information}. \qed

\bibliographystyle{IEEEtran}
\bibliography{bibfile_fb_capacity}
 \end{document}